\newtheorem{Def}{\em{Definition}}
\newtheorem{Theorem}{\em{Theorem}}
\newcommand{\Cmax}{C_{\max}}
\newcommand{\NP}{{\cal NP}}
\renewenvironment{proof}{\noindent {\bf Proof. }}{$\blacksquare$ \vskip 1em}
\def \Cmax {C_{\max}}
\def \Lmax {L_{\max}}
\def \NP {{$\cal NP$}}
\def \P {{$\cal P$}}
\begin{document}
\normalem 
\title{
How the structure of precedence constraints may change the complexity class of scheduling problems 
} 

\author{D. Prot \and O. Bellenguez-Morineau}
\institute{D. Prot \and O. Bellenguez-Morineau \at
Ecole des Mines de Nantes, 4 rue Alfred Kastler, B.P. 20722 F-44307 Nantes Cedex 3. France\\
\email{dprot@talend.com, odile.morineau@mines-nantes.fr}}

\date{Received: date / Accepted: date}

\maketitle

\begin{abstract} 

This survey aims at demonstrating that the structure of precedence constraints plays a tremendous role
on the complexity of scheduling problems. Indeed many problems can be \NP-hard when considering
general precedence constraints, while they become polynomially solvable for particular
precedence constraints. Add to this, the existence of many very exciting
challenges in this research area is underlined.

\end{abstract}

\keywords{Scheduling \and Precedence constraints \and Complexity }

\section{Introduction}

Precedence constraints play an important role in many real-life scheduling problems.
For example, when considering the scheduler of a computer,
some operations have to be finished before some others begin. Other classical examples can be found
in the book of Pinedo (\cite{P12}). In the most general cases, precedence constraints can be represented by an
arbitrary directed acyclic graph. Nevertheless, in some cases, it is possible for precedence constraints to take
a particular form. For example, if a problem includes only precedence constraints related to assembling steps,
precedence constraints can be represented by a particular directed acyclic graph called intree.
The fact that the precedence graph takes a particular form may transform the complexity of the problem. Most of the time, adding precedence constraints makes problem harder, since the empty graph is included in many graph classes.
Yet, it may make the problem easier if it is not the case, for example for the class of graphs of bounded width.
For this reason, the idea of this survey is to consider the complexity results in scheduling theory according
to the structure of precedence constraints.

We assume that the reader is conversant with the theory of \NP-completeness, otherwise the book by \cite{GJ79} is a good entry point.
We will discuss, in the conclusion of this paper, the parameterized complexity. The reader can refer to the book of \cite{DF12} 
if needed.
\cite{LRK78} offer a large set of complexity results for scheduling problems with precedence constraints.
\cite{GLLKRK79} and \cite{LLRKS93} propose two surveys of complexity results for scheduling problems, but they are not
necessarily focused on precedence constraints.
For complexity results with the most classical precedence constraints (i.e., chains, trees, series-parallel and arbitrary precedence constraints),
the reader can refer to the book by ~\cite{B07}, and/or to the websites \cite{BK} and \cite{D16}.
Note that this website has been recently updated in order to include most of the results discussed in this article.
\cite{M89} proposes a very interesting survey dedicated to specific partial ordered sets, and studies their structure. 
Some applications to scheduling are also presented. Since this survey was conducted, many results arised in scheduling theory for specific precedence graphs,
we hence believe that a new survey would be beneficial for the scheduling community.
We restrict our field on purpose to complexity results and do not talk about approximations results (\cite{WS11}), despite the fact that many results
arised in this field recently in scheduling theory, such as \cite{S11} and \cite{LR16}. 
We believe that it is important to limit the scope of the survey,
in order to be as complete as possible in a given field.

The paper is organized as follows:
In Section~\ref{sec:order}, we introduce all the specific types of precedence constraints that will be studied in this paper, while scheduling notations are
recalled in Section~\ref{sec:notation}.
Section~\ref{sec:one} is dedicated to single-machine scheduling problems, and Sections~\ref{sec:paraNo} and~\ref{sec:paraYes}
are respectively dedicated to non-preemptive and preemptive parallel machines scheduling problems.
Each of these three sections is based on the following structure: we first give the polynomial results, then the
\NP-hard cases and last the most interesting open problems.
Finally in Section~\ref{sec:ccl} we give some concluding remarks.

\section{Special types of precedence constraints}\label{sec:order}

In this section we introduce, for the sake of completeness, the special types of precedence constraints that have already been studied in scheduling theory. 
Precedence constraints between jobs are easily modelled by a directed acyclic graph and we will use it as long as possible. Nevertheless,precedence relations may also be treated as partially ordered set (poset) in the terminology of order theory, and some definitions at the end of this section are much easier to understand from this point of view.

First, let us recall some basic graph theory definitions.
Let $G=(X,A)$ be a directed acyclic graph, where $X$ denotes the set of vertices and $A$, the set of arcs.

A directed acyclic graph is a collection of {\it chains} if each vertex has at most one successor and at most one predecessor.
An {\it inforest} (resp. {\it outforest}) is a directed acyclic graph where each vertex has at most one successor (resp. {\it predecessor}). 
An {\it intree} (resp. {\it outtree}) is a connected inforest (resp. outforest).
We call {\it forest} (resp. {\it tree}) a graph that is either an inforest or an outforest (resp. either an intree or an outtree).
An {\it opposing forest} is a collection of inforests and outforests.


For each vertex $x\in X$, we can compute its {\it level}  $h(x)$ which corresponds to the longest directed path starting from $x$ in $G$.
The height of a DAG, denoted $h(G)$, corresponds to the number of levels - 1 in this graph, as illustrated with Figure~\ref{fig:prec}.
Directed acyclic graphs of {\it bounded height} correspond to directed acyclic graphs where the height is bounded by a constant.

\begin{center}
\begin{figure}[htbp]
\begin{minipage}[c]{.6\linewidth}
\includegraphics[scale=0.2]{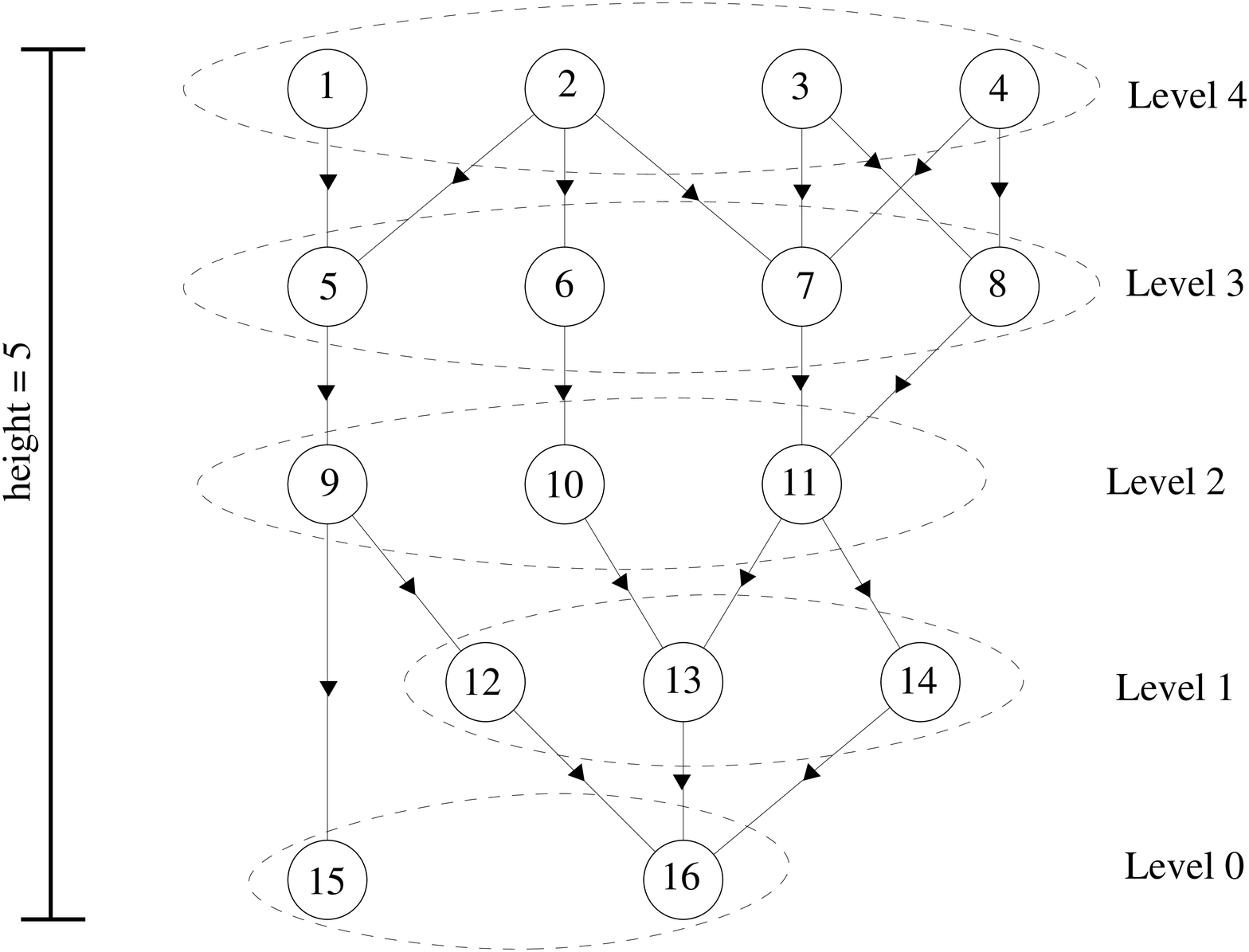}
\caption{The height of a DAG}
\label{fig:prec}
\end{minipage}\hfill
\begin{minipage}[c]{.35\linewidth}
\includegraphics[scale=0.2]{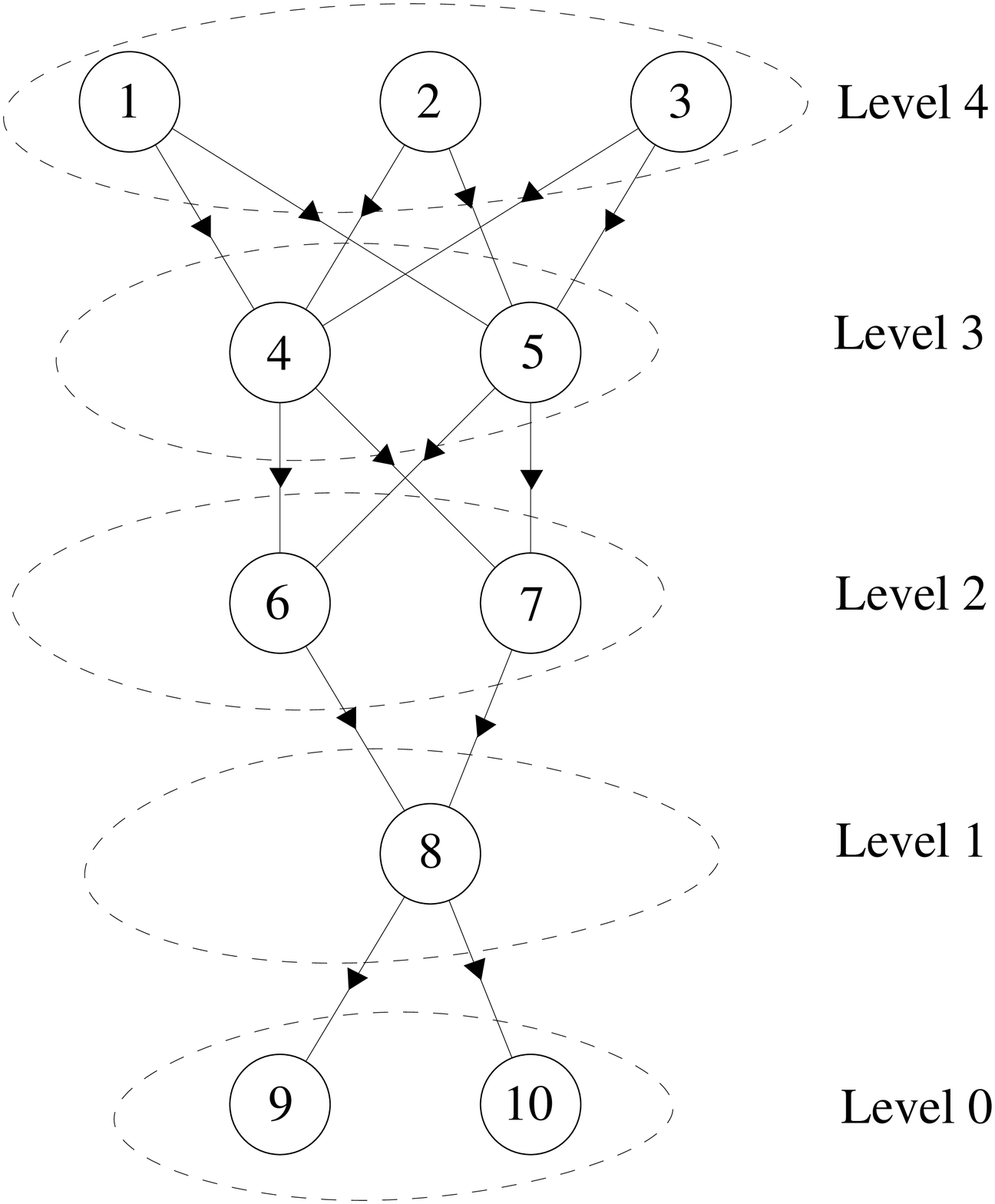}
\label{fig:levelOrder}
\caption{A level order graph}
\end{minipage}
\end{figure}
\end{center}

\begin{Def}\label{def:lo}
{\em (Level order graph)}
A directed acyclic graph is a level order graph if each vertex of a given level $l$ is a predecessor of all the vertices of level $l-1$ (see Figure~\ref{fig:levelOrder}).
\end{Def}

{Series-parallel} graphs (sp-graph) are defined in many ways, we use the inductive definition of~\cite{L78}.
\begin{Def}\label{def:spg}
{\em (sp-graph)}
A graph consisting of a single vertex is a sp-graph. Given two sp-graphs $G_1=(X_1,A_1)$ and $G_2=(X_2,A_2)$, 
the graph $G=(X_1\cup X_2, A_1 \cup A_2)$ is a sp-graph (this is called parallel composition).
Given two sp-graphs $G_1=(X_1,A_1)$ and $G_2=(X_2,A_2)$, 
the graph $G=(X_1\cup X_2, A_1 \cup A_2 \cup X_1 \times X_2)$ is a sp-graph (this is called series composition).
\end{Def}
An example is given in Figure~\ref{fig:spg}. Note that sp-graph can also be defined by the forbidden subgraph of Figure~\ref{fig:forbiddenSP}.

\begin{center}
\begin{figure}[htbp]
\begin{minipage}[c]{.55\linewidth}
\includegraphics[scale=0.2]{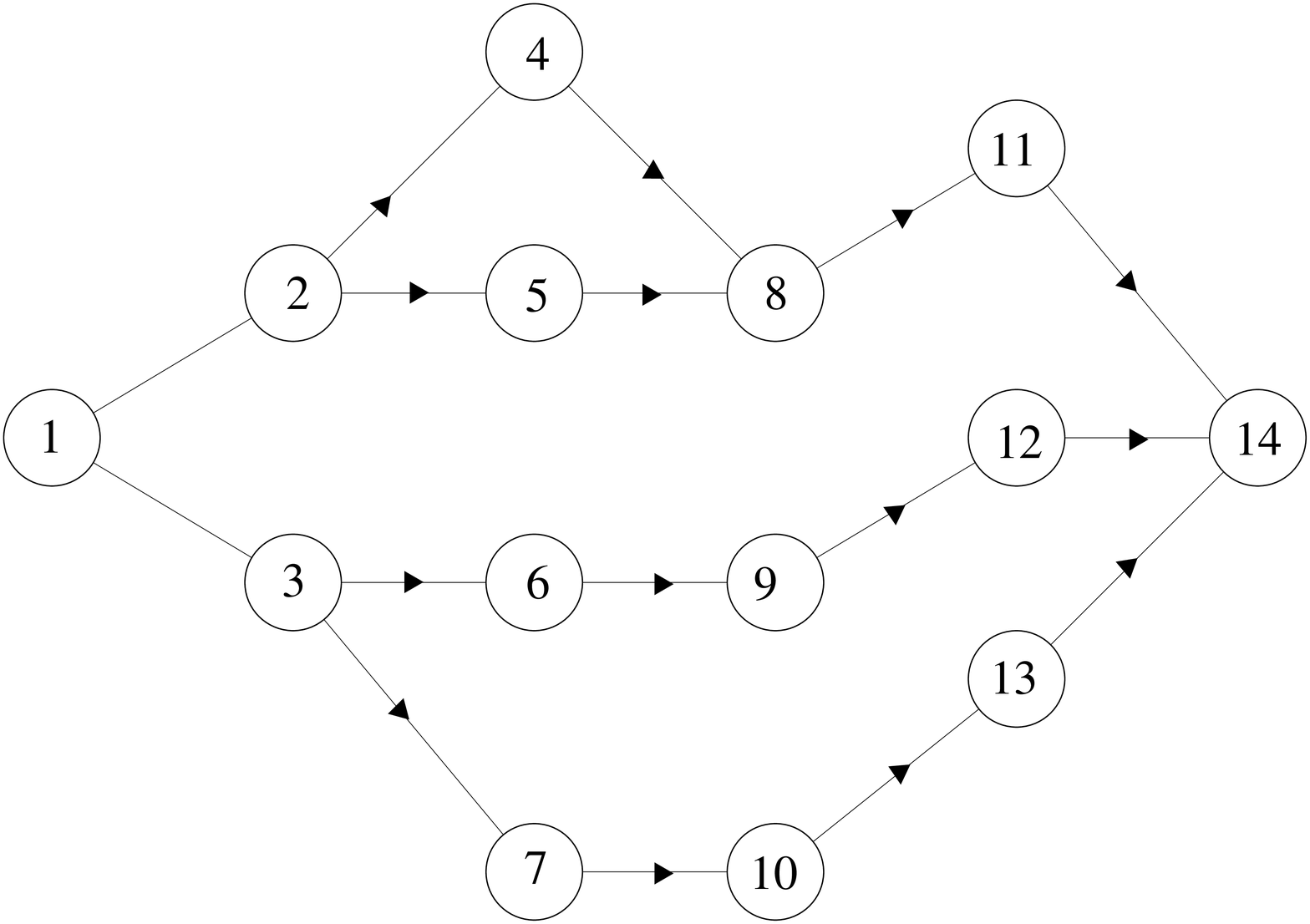}
\caption{A series-parallel graph}
\label{fig:spg}
\end{minipage}\hfill
\begin{minipage}[c]{.4\linewidth}
\includegraphics[scale=0.2]{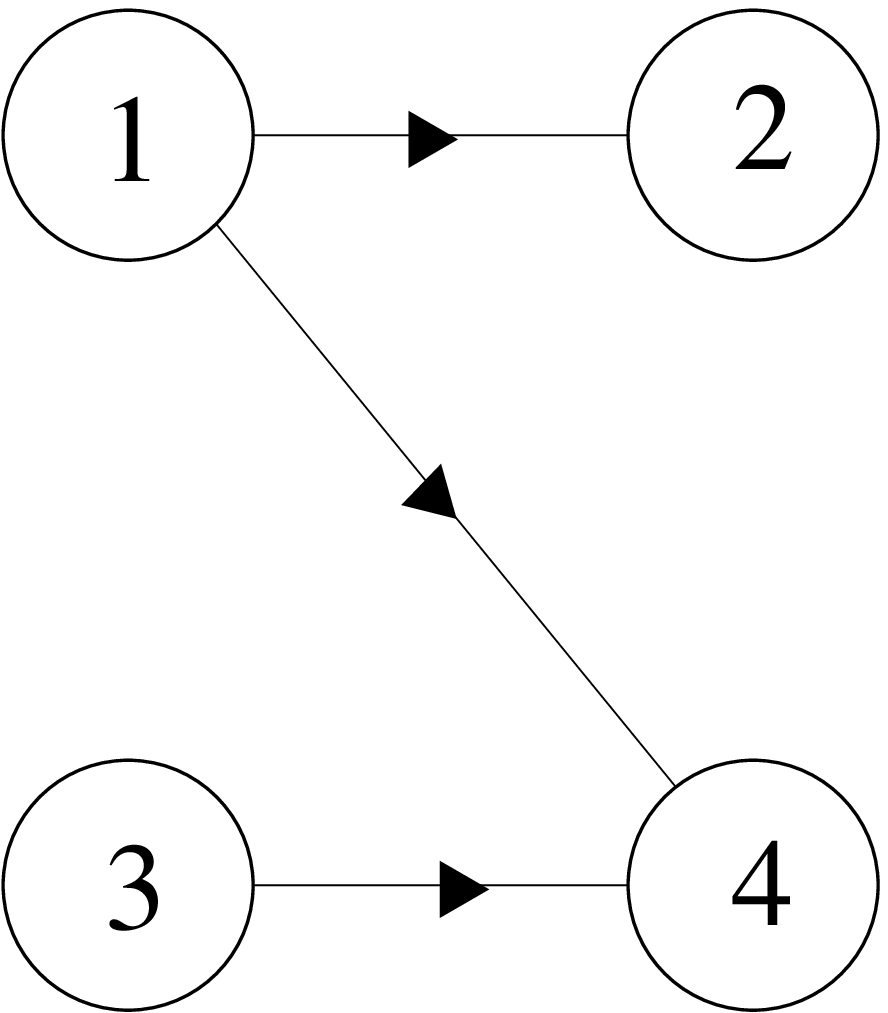}
\caption{The forbidden subgraph for sp-graphs}
\label{fig:forbiddenSP}
\end{minipage}
\end{figure}
\end{center}

A {divide-and-conquer-}graph (DC-graph) is a special sp-graph built using symmetries. It can be used to model divide-and-conquer algorithms (for example binary search, merge sort, number multiplication...) and is formally defined as follows:
\begin{Def}\label{def:dc}
{\em (DC-graph)}
A single vertex is a DC-graph; given two vertices $s$ and $t$ and $k$ DC-graphs $(X_1,A_1),\dots, (X_k,A_k)$, 
the graph $\left(\cup_{k=1}^nX_k \cup \{s\}\cup \{t\}, \cup_{k=1}^n \left(A_k \cup (s\times X_k) \cup (X_k \times t)\right)\right)$ is a DC-graph.
\end{Def}
This definition is illustrated with Figure~\ref{fig:DC}.

\begin{center}
\begin{figure}[htbp]
\begin{center}
\includegraphics[scale=0.2]{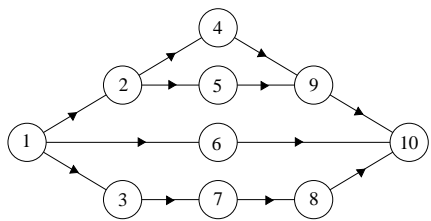}
\caption{A DC-graph}
\label{fig:DC}
\end{center}
\end{figure}
\end{center}

\begin{center}

\begin{figure}[htbp]
\begin{minipage}[c]{.7\linewidth}
\includegraphics[scale=0.2]{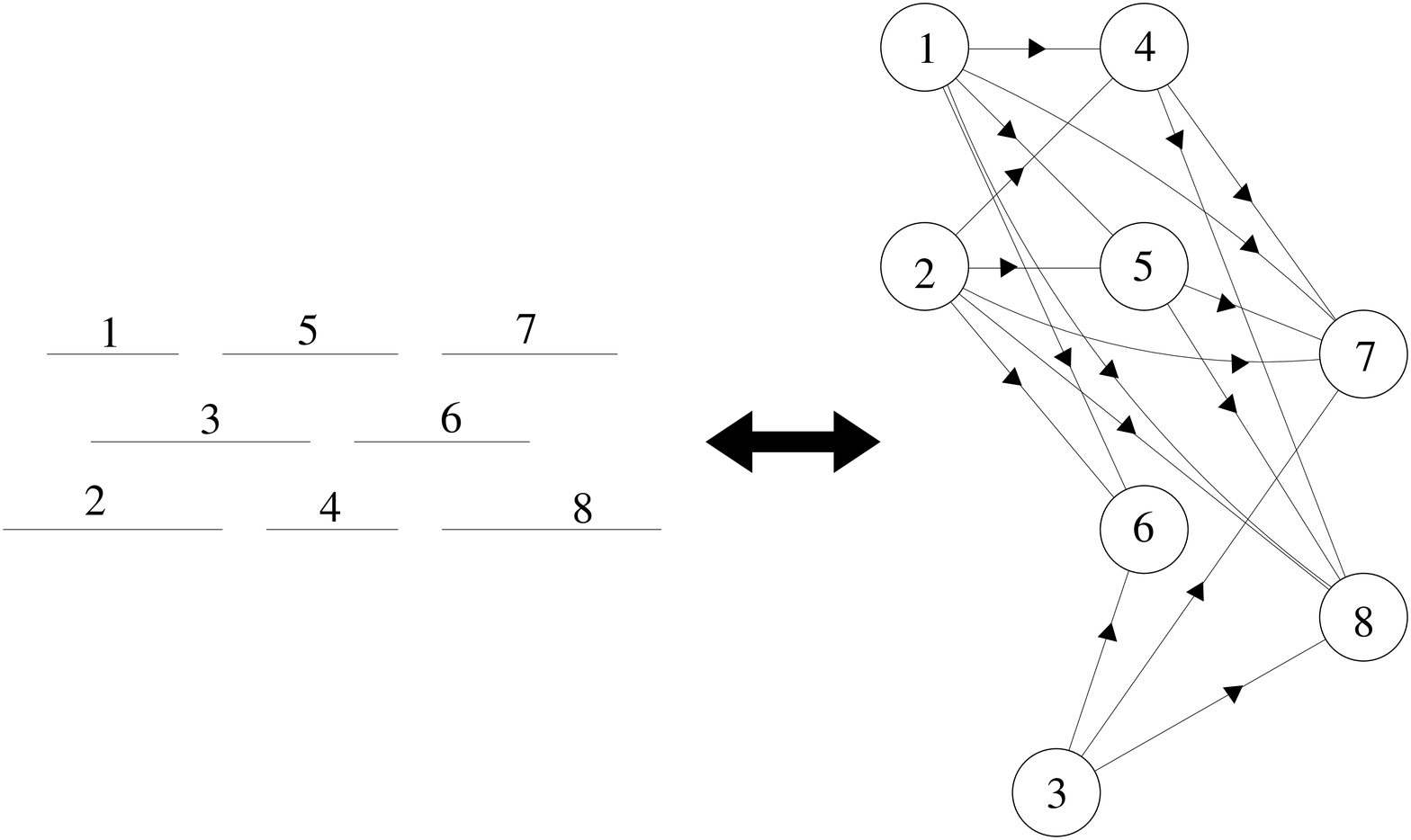}
\caption{A set of real intervals and the corresponding interval order graph}
\label{fig:interval}
\end{minipage}\hfill
\begin{minipage}[c]{.3\linewidth}
\includegraphics[scale=0.2]{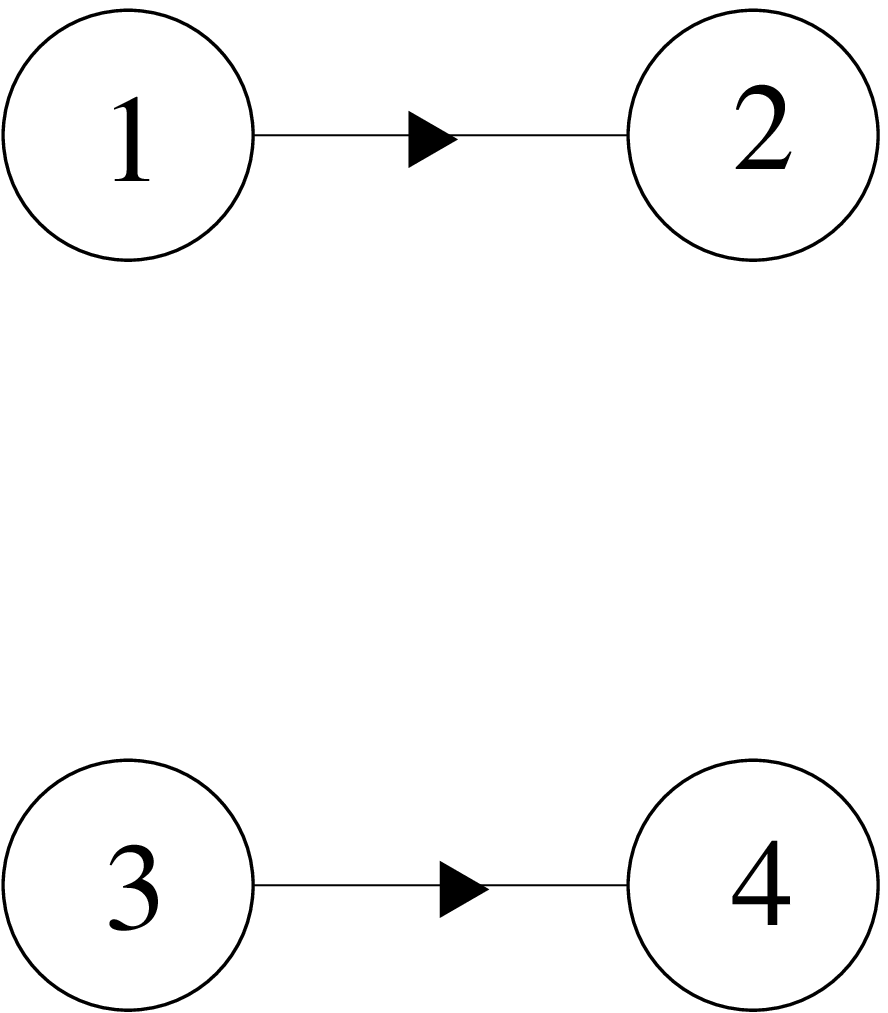}
\caption{The forbidden subgraph for interval order graphs}
\label{fig:forbiddenI}
\end{minipage}
\end{figure}
\end{center}

\begin{Def}\label{def:io}
{\em (Interval order graph)}
The interval order for a collection of intervals on the real line is the partial order corresponding to their left-to-right precedence relation, so that the interval order graph is the Hasse diagram of an interval order. 
such that for any two vertices $x$ and $y$, $(x,y)\in A \iff e_x \leq s_y$.
\end{Def}

An example is given in Figure~\ref{fig:interval}.
\cite{PY79} show that interval order graphs can also be defined by the forbidden induced subgraph presented in Figure~\ref{fig:forbiddenI}.
Larger classes of graphs were defined by forbidden subgraphs, such as {\em quasi-interval order graphs} and {\em over-interval order graphs}
(respectively in~\cite{M99} and \cite{CM05}), the quasi-interval order graphs being strictly included in over-interval order graphs. The corresponding
forbidden subgraphs are drawn in Figures~\ref{fig:forbiddenQIO} and~\ref{fig:forbiddenOI}.

\begin{center}
\begin{figure}[htbp]
\begin{minipage}[c]{.5\linewidth}
\includegraphics[scale=0.2]{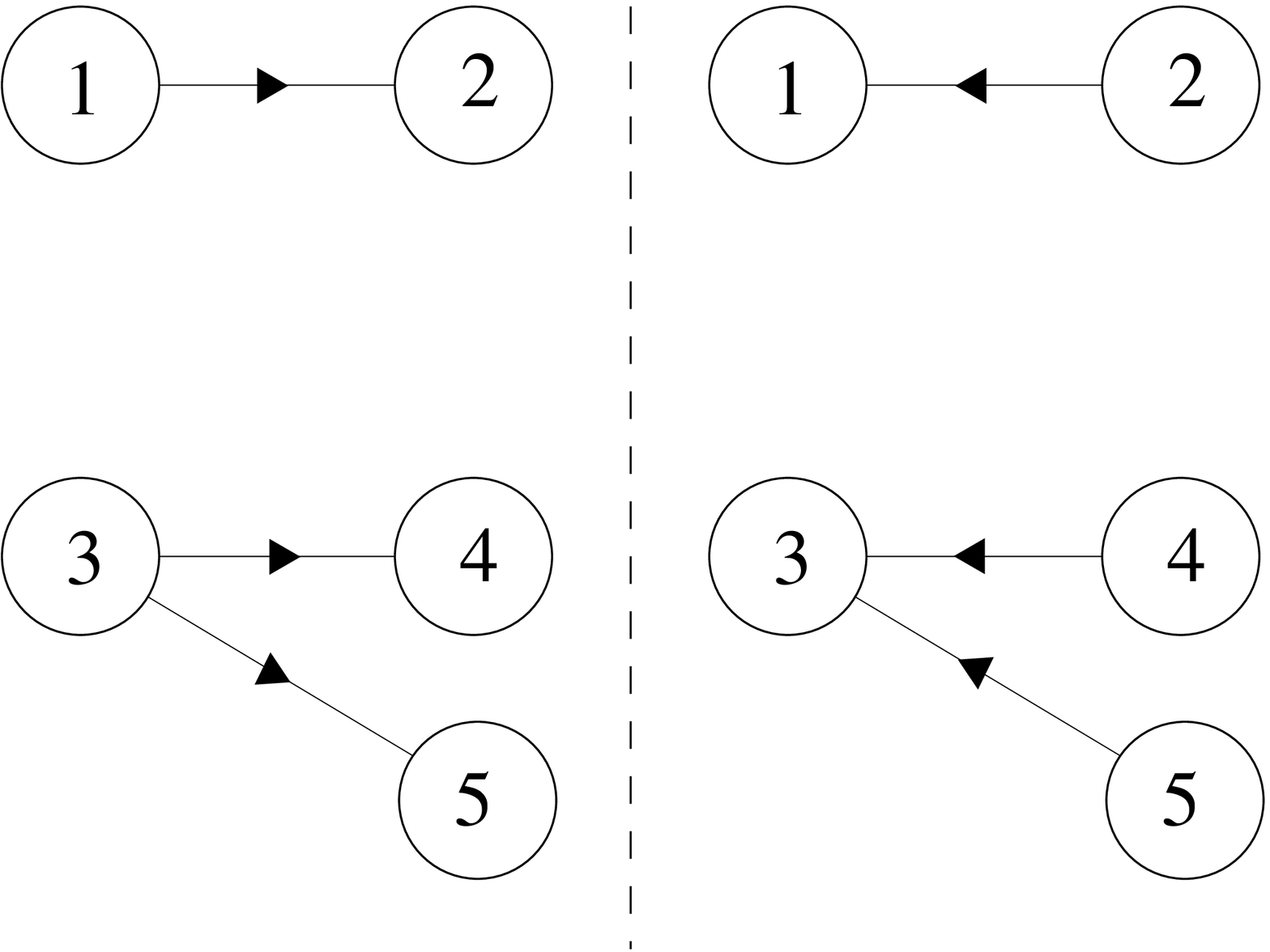}
\caption{The three forbidden subgraphs for quasi-interval order graphs}
\label{fig:forbiddenQIO}
\end{minipage}\hfill
\begin{minipage}[c]{.45\linewidth}
\includegraphics[scale=0.2]{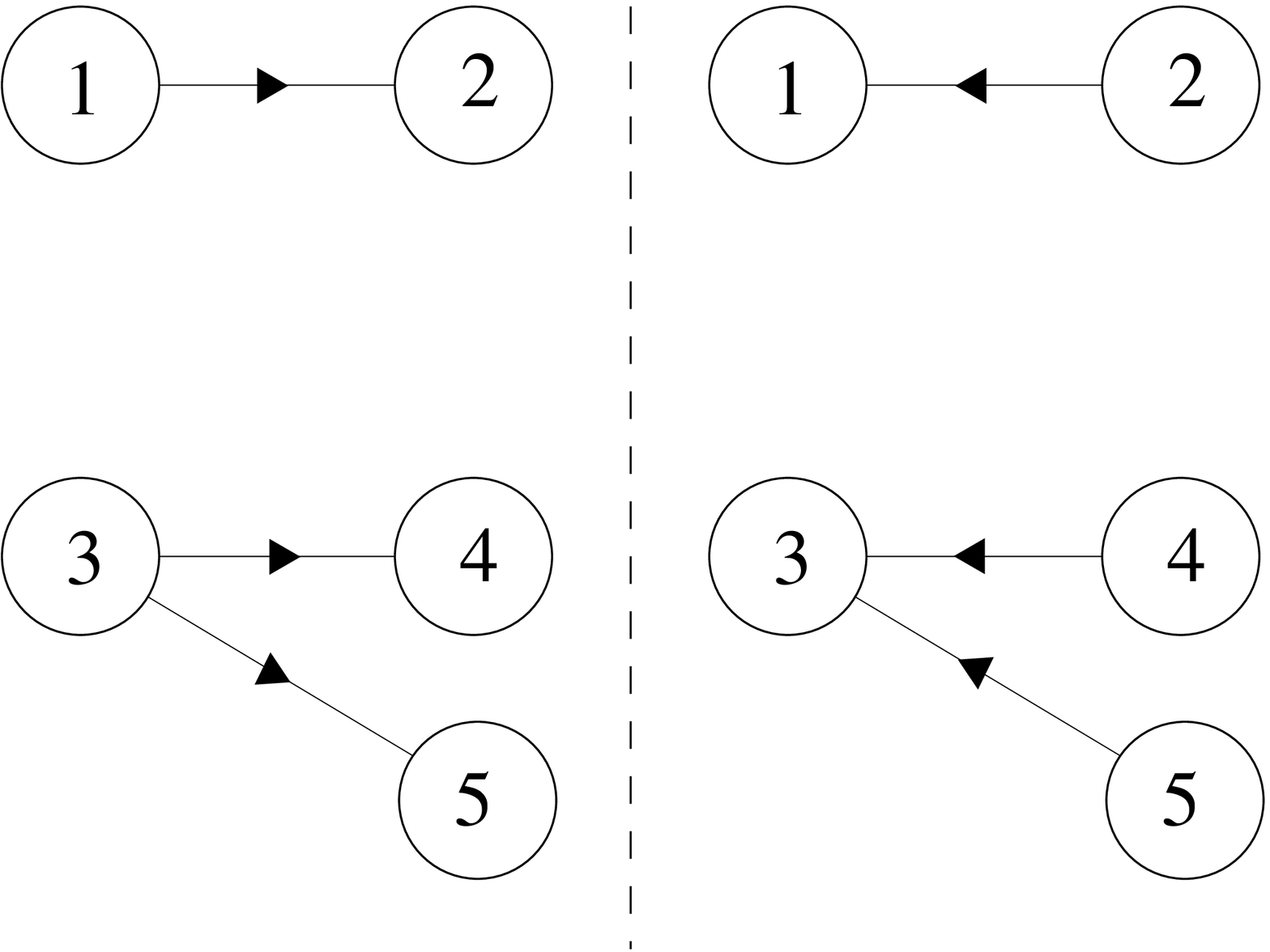}
\caption{The two forbidden subgraphs for over-interval order graphs}
\label{fig:forbiddenOI}
\end{minipage}
\end{figure}
\end{center}

To ease the reading, the following definitions will be given within the order theory paradigm. We will hence talk about a partial order set ${\cal P}=(X,\preceq_P)$ rather than a directed acyclic graph $G=(X,A)$ to describe the precedence graph, and a partial order $\preceq_P$ corresponds to the precedence constraints.

\begin{Def}\label{def:antichain}
{\em (Antichain)}
Given a partial order set ${\cal P}=(X,\preceq_P)$, an antichain is a subset $S$ of $X$ such that any two elements of $S$ are incomparable.
\end{Def}

\begin{Def}\label{def:width}
{\em (Width)}
Given a poset ${\cal P}=(X,\preceq_P)$, the width of a poset is the size of a maximum antichain.
\end{Def}

By extension, for a given directed acyclic graph $G=(X,A)$ we define the width of the graph to be the width of the corresponding poset, denoted by $w(G)$.

The ${\cal A}_m-$order (first introduced in~\cite{MQ97}) contains the over-interval order for any integer $m\geq 2$ and is defined in the following way:

\begin{Def}\label{def:amo}
{\em (${\cal A}_m-$order)}
Let ${\cal P}=(X,\preceq_P)$ be a poset. For any two antichains $A$ and $B$ of size at most $m$, let us define the four
sets :
$\max (A,B)=\{x \in A \cup B | \exists y \in A\cup B, y \preceq_P x\}$,
$\min (A,B)=\{x \in A \cup B | \exists y \in A\cup V, x \preceq_P y\}$,
$\overline{\max}(A,B) = (A \cap B) \cup \max (A,B)$, and
$\overline{\min}(A,B) = (A \cap B) \cup \min (A,B)$.
$\preceq_P$ is an ${\cal A}_m$ order if and only if there do not exist two antichains $A$ and $B$ of size $m$ at most, such that $|\overline{\max}(A,B)| \geq m+1$
or $|\overline{\min}(A,B)| \geq m+1$.
\end{Def}

We call ${\cal A}_m-$order graph a directed acyclic graph for which the set of arcs corresponds to an ${\cal A}_m-$order.

\begin{Def}\label{def:linExt}
{\em (Linear extension)}
Given a partial order $\preceq_P$ over a set $X$, a linear extension of $\preceq_P$ over $X$ is a total order respecting $\preceq_P$.
\end{Def}

\begin{Def}\label{def:kdim2}
{\em (Dimension)}
The dimension of a poset ${\cal P}=(X,\preceq_P)$ is the mimimum number $t$ of
linear extensions $\preceq_1,\dots, \preceq_t$ such that $ x \preceq_P y \iff \forall l \in 1..t, x \preceq_l y$. In other words, 
if $x||y$ ($x$ and $y$ are incomparable in $\preceq_P$), then there are
at least two linear extensions, one with $x \preceq y$ and another one with $y\preceq x$.
\end{Def}

An interesting point is that series-parallel graphs are strictly included in directed acyclic graphs of dimension 2.

The {fractional dimension} of a poset is extending the notion of dimension (see~\cite{BS92}).
\begin{Def}\label{def:fdim}
{\em (Fractional dimension)}
For any integer $k$, $t(k)$ denotes the minimum number of linear extensions such that 
for any two incomparable elements $x$ and $y$, there are at least $k$ extensions with $x\preceq y$ and 
$k$ with $y\preceq x$. The {fractional dimension} is the limit  $\{t(k)/k\}$ as $k$ tends to $+\infty$.
\end{Def}

The diagram in Figure~\ref{fig:inclusion} provides a better overview of the existing inclusion between the different classes.

\begin{center}
\begin{figure}[htbp]
\includegraphics[scale=0.25]{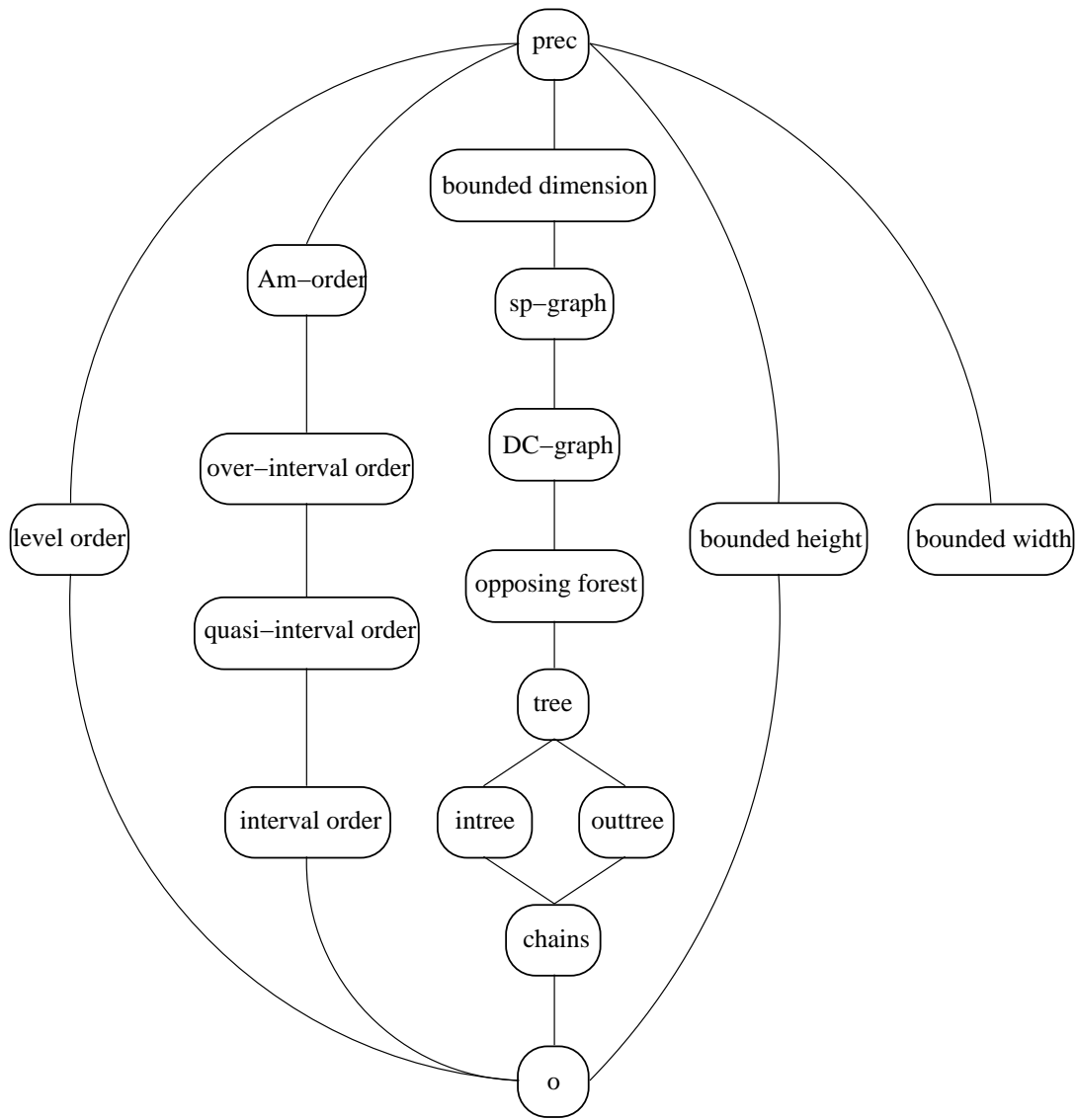}
\caption{Hasse diagram for different classes of directed acyclic graphs}
\label{fig:inclusion}
\end{figure}
\end{center}

\section{Scheduling notation}\label{sec:notation}

In this paper, we will use the standard $\alpha|\beta|\gamma$ notation introduced in~\cite{GLLKRK79}, and updated in~\cite{B07}. We define below the different notations used all along the paper.

The $\alpha$-field is used for the machine environment. $\alpha=1$ corresponds to a single machine problem;
if $\alpha=Pm$, there is a fixed number $m$ of identical parallel machines. 
If this number is arbitrary, it is noted $\alpha=P$. 
Similarly, if $\alpha=Qm$ (resp. $\alpha=Q$), it corresponds to a fixed (resp. arbitrary) number
of uniform parallel machines, i.e., each machine $i$ has a speed $s_i$ and the processing time of a job $j$
on machine $i$ is equal to $p_j/s_i$.

The field $\beta \subset \{ \beta_1, \beta_2, \beta_3, \beta_4\}$ describes the jobs characteristics, the possible entries that we
will deal with are the following ones:

\begin{itemize}
\item
$\beta_1 \in \{ pmtn, \circ\}$ : $pmtn$ means that preemption is allowed, i.e., a job may
be interrupted and finished later. If $\beta_1=\circ$, preemption is forbidden.
\item
$\beta_2$ describes the precedence constraints. If $\beta_2=\circ$, there is no precedence constraint, whereas $\beta_2=prec$ means that the precedence graph
is a general directed acyclic graph. This field can take many values according
to the structure of the directed acyclic graph, as presented in the previous section. For sake of completeness,
all the acronyms are recalled hereinafter: $chains$, $intree$, $outtree$, $opp. forest$ (opposing forest), $io$ (interval order graph),
$qio$ (quasi-interval order graph), $oio$ (over-interval order graph), $sp-graph$ (series-parallel graph), $DC-graph$ (divide-and-conquer graph), $lo$ (level order graph), $h(G)\leq k$ (directed acyclic graph with height bounded by $k$), $dim \leq k$ (directed acyclic graph with dimension bounded by $k$), $fdim \leq k$ (directed acyclic graph with fractional dimension bounded by $k$), $w(G) \leq k$ (directed acyclic graph
with width bounded by $k$).

It appears there that classical literature often abuses of terms $intree$ and $outtree$ to handle in fact $inforest$ and $outforest$, that can be mixed in  $opp. forest$ for some problems. 
\item 
$\beta_3 \in \{r_j, \circ\}$ : if $\beta_3=r_j$, each job $j$ has a given release date. If $\beta_3=\circ$, the release date is $0$ for each job.
\item
$\beta_4$ represents the processing time of a job $j$.
If $\beta_4=\circ$, there is one processing time $p_j$ for each job $j$. 
If $p_j=p$, all the jobs have the same processing time $p$.
When $p_j=1$, we use the acronym UET which stands for Unit Execution Time.
In some cases, the processing time may increase or decrease with either the position of the job or the starting time of the job. If the job is in position $r$ on a machine, the processing time will be noted $p_j^{[r]}$. If the processing time is depending of the starting time $t$ of the job, it will be written $p_j^{(t)}$. 
\end{itemize}

The $\gamma-$field is related to the objective function of the problem. Let $C_j$ be the completion time of job $j$. The {\it makespan} is defined by $\Cmax=\max_j C_j$
and the {\it total completion time} by $\sum_j C_j$. 
It is clear that, in general, makespan and total completion time are not equivalent. Nevertheless, for some problems, we can show that there exists an {\it ideal} schedule,
which both minimizes makespan $C_{max}$ and total completion time $\sum C_j$.
Due-date related objectives are also studied; if $d_j$ is the due date of job $j$, then the {\it lateness} of job $j$
is defined by $L_j=C_j-d_j$. The {\it tardiness} is $T_j=\max(0,C_j-d_j)$ and the {\it unit penalty} $U_j$ is equal to one if $C_j > d_j$ and to zero otherwise.
We then can define the {\it maximum lateness} $L_{max} = \max L_j$, the {\it total tardiness} $\sum T_j$, the {\it total number of late jobs} $\sum U_j$.
A weight $w_j$ may also exist for each job $j$, leading to the corresponding objective functions: the {\it total weighted completion time} $\sum w_jC_j$, the {\it total
weighted tardiness} $\sum w_j T_j$ and the {\it total weighted number of late jobs} $\sum w_j U_j$.
All the functions presented so far are {\it regular} functions, i.e., they are non-decreasing with $C_j$.

\section{Single machine problems}\label{sec:one}

For single machine problems, most of the interesting results for this survey are related to the total weighted completion time $\sum w_j C_j$.
It is mainly due to the fact that $1|chains,p_j=1|\sum U_j$ (see~\cite{LRK80}) and $1|chains,p_j=1|\sum T_j$ (see~\cite{LY90}) are already \NP-hard,
while $1|prec,r_j|\Cmax$ is solvable in polynomial time.
Nevertheless, as we will see in Section \ref{sec:oneOpen}, there are some interesting open problems for other criteria when considering preemption.


\subsection{Polynomial cases}

\cite{L78} uses Sidney's theory (\cite{S75}) to derive
a polynomial-time algorithm to solve problem $1|sp-graph|\sum w_iC_i$. 
The most important results are based on the concept of a module
in a precedence graph $G=(N,A)$:
A non-empty subset $M\subset N$ is a module if, for each job $j\in N-M$,
exactly one of the three conditions holds:
1. $j$ must precede every job in $M$,
2. $j$ must follow every job in $M$,
3. $j$ is not constrained to any job in $M$.
Using this concept leads to a very powerful theorem stating that there exists
an optimal sequence consistent with any optimal sequence of any module.

A large improvement on this problem has been done recently, by using
order theory and by proving that the problem is a special case of the
vertex cover problem.
More precisely, in~\cite{CS05} the authors conjecture
that $1|prec|\sum w_j C_j$ is a special case of vertex cover and
prove that, under this conjecture, the problem $1|prec|\sum w_j C_j$
is polynomial if the precedence graph is of dimension 2.
In~\cite{AM09}, the authors prove the conjecture and hence the result provided
in~\cite{L78} is considerably extended (since series-parallel graphs are
strictly included in DAGs of dimension 2).

Using the same methodology as in~\cite{L78}, \cite{WNC08} extend these results to the case where
jobs are deteriorating, i.e., processing times are an increasing function of their starting time, and 
they show that $1|sp-graph,p_j^{(t)}=p_j(1+at)|\sum w_jC_j$ and $1|sp-graph,p_j^{(t)}=p_j+\alpha_jt|\Cmax$ can be solved in a polynomial time, 
where $p_j$, $a$, and $\alpha_j$ are positive constants and $t$ is the starting time of the job.  
The same framework is used in~\cite{WW13} for position-dependent processing times,
and the authors show that $1|sp-graph,p_j^{[r]}=a_j+b_j r|\Cmax$ and $1|sp-graph,p_j^{[r]}=a_j-b_j r|\Cmax$, where $r$ is the position of the job, $a_j$ and $b_j$ positive constants,
are polynomially solvable.
\cite{GPSW08} propose a more general framework than the one introduced in~\cite{WNC08}, that was first presented in~\cite{MS79}. They extend it to different
models with deterioration and learning, that were aimed at minimizing either the total weighted completion time or the makespan, with series-parallel
precedence constraints.

\subsection{Minimum {\cal NP}-hard cases}

The problem $1|prec|\sum w_jC_j$ is known to be \NP-hard (see~\cite{LRK78}). 
In order to identify for which precedence graph the problem may be polynomially solvable, 
we present here the minimal (with respect to precedence constraints) \NP-hard
cases: the problem is still ${\cal NP}-$hard even if
the precedence graph is:

\begin{itemize}
\item of indegree at most 2 (see~\cite{LRK78}). Note that this result also stands with equal weights (i.e., $w_j=1$)
\item of bounded height (proof is straightforward by using the transformation
proposed in~\cite{L78}).
\item an interval order graph (see~\cite{AMMS11}). This is a strong difference with parallel machine results for
which interval order graphs provide often polynomial algorithms, as we will see in the dedicated section.
\item of fractional dimension (see Definition~\ref{def:fdim}) greater or equal to $3$ (see~\cite{AMMS11}). 
\end{itemize}

\subsection{Open problems}\label{sec:oneOpen}

The problem $1|prec|\sum w_iC_i$ has been widely studied, and the boundary between polynomial and \NP-hard cases is globally well-defined.
However, there are still boundaries to defined, as illustrated by the following examples. 
First,  if the fractional dimension $fdim$ of the precedence graph lies in interval $]2,3[$, the problem is open (it is polynomially solvable if $fdim \leq 2$
since the fractional dimension of a DAG is less than or equal to the dimension of this DAG and the problem is solvable in polynomial time if the precedence graph is of dimension 2).
This is an interesting question but the gap is rather limited. A wider open question 
is when we consider the problem with equal weights, i.e., $1|prec|\sum C_i$. The problem is still \NP-hard,
even if the precedence graph is of indegree at most 2 (see~\cite{LRK78}). Nevertheless, we may hope that the problem becomes polynomial
for precedence graphs with dimension larger than two. It is also possible that the problem is solvable in polynomial time if the precedence graph is an interval order graph, and even for larger classes like quasi-interval order graphs and over-interval order graphs.

For the criteria related to tardiness $T_j$ and unit penalty $U_j$, it is surprising to see that preemptive problems with precedence constraints have not been studied yet. More precisely, \cite{TNC06} have shown that $1|pmtn, r_j, p_j=p|\sum T_j$ is solvable in polynomial time, but it is still open whether adding precedence constraints leads this problem to be \NP-hard or not; the problem remains also open when considering the more general criterion $\sum w_j T_j$.
The same outline appears when considering unit penalty:  $1|r_j;pmtn;p_j=p|\sum w_jU_j$ is solvable in polynomial time (with an algorithm in $O(n^{10})$ by \cite{B99},
and in $O(n^4)$ by \cite{BCDJV04}); nevertheless, nothing has been shown when adding precedence constraints to the problem. 
The first research avenue is to study this problem including Chains as a first step and determine whether the problem is polynomial or not.

\section{Parallel machines without preemption}\label{sec:paraNo}


When considering non-preemptive scheduling problems, whatever the structure of precedence graphs, we will mainly focus on 
problems with equal processing times, since problems $P2||C_{max}$ and $P2|chains|\sum C_j$ are already NP-hard (see~\cite{LRKB77} and \cite{DLY91}).

\subsection{Polynomial cases}

\subsubsection*{Makespan criterion and arbitrary number of machines}

Three seminal works on parallel machines with precedence constraints are the approaches of~\cite{H61}, ~\cite{PY79} and \cite{M89}
in which the authors are respectively dealing with trees, interval order graphs and graphs of bounded width.

In~\cite{H61}, the author proves that problem $P|tree,p_j=p|\Cmax$ is polynomially solvable by a list scheduling algorithm where the
highest priority is given to the job with the highest level (this strategy is called HLF, for Highest Level First).
It is unlikely to find a precedence graph that strictly includes trees for which the problem is solvable in polynomial time, since it was proven
in~\cite{GJTY83} that scheduling opposing forest is \NP-hard. The hardness of the latter problem is mainly due
to the arbitrary number of machines, since it is solvable in polynomial time for any fixed number of machines.

In~\cite{PY79}, the authors prove that problem $P|io, p_j=p|\Cmax$ is polynomially solvable by a list scheduling algorithm where the
highest priority is given to the job with the largest number of successors.
This result has been improved twice; first \cite{M99} shows that the same algorithm gives an optimal solution if interval order graphs
are replaced by quasi-interval order graphs, that properly contains the former. \cite{CM05} show that the same result does not stand
for over-interval order graphs, but the Coffman-Graham algorithm (see~\cite{CG72}) can be applied to solve problem $P|oio,p_j=p|\Cmax$ to optimality.

In~\cite{M89}, the author studies the problem with bounded width, equal processing times and the makespan criterion. He shows that the problem
 can be solved in polynomial time by using dynamic programming on the digraph of order ideals. \cite{MT99} proposed their own approach to solve the problem $P|w(G)\leq k, r_j, p_j=1|f$ for any regular function $f$. More precisely, their algorithm consists in searching a shortest path in the related transversal graph.

When adding release dates, the problem is already \NP-hard for intrees, yet it is solvable in polynomial time for outtrees (see~\cite{BGJ77}, and \cite{M82} for a linear algorithm for the latter problem).
Note that there is a strong relationship between scheduling with release dates and $\Cmax$ criterion and scheduling with $L_{max}$, 
by simply looking at the schedule in the reverse way, and reversing the precedence constraints. Hence problems
$P|intree,p_j=p|L_{max}$ is polynomially solvable and $P|outtree,p_j=p|L_{max}$ is \NP-hard.

\cite{KRP09} recently opened new perspectives, since they show that $P|DC-graph,p_j=p|\Cmax$ is solvable in polynomial time. 
They more precisely proved that the Highest Level First strategy (used in~\cite{H61}) solves the problem to optimality when
the precedence graph is a divide-and-conquer graph (see Definition~\ref{def:dc}).

\subsubsection*{Makespan criterion and fixed number of machines}

Let us now focus on the case where the number of machines is fixed. 
Recall that the problem $Pm|prec,p_j=p|\Cmax$ is still open (this problem is known as [OPEN8] in the book by~\cite{GJ79}) for $m\geq 3$, and it was solved for $m=2$ in~\cite{CG72}.
For opposing forests, \cite{GJTY83} propose an optimal polynomial algorithm (of complexity $O(n^{m^2+2m-5}\log n)$) that consists in a divide and conquer approach, that uses the HLF algorithm as a subroutine. 
A new algorithm with complexity $O(n^{2m-2} \log n)$ has been proposed by~\cite{DW85}, who also show that
the problem is polynomially solvable for level order graphs (that are strictly included in series-parallel graphs). \cite{DW84} solve the case where the precedence graph is of bounded height by proposing an algorithm of time complexity $O(n^{h(m-1)+1})$.
Recently, \cite{AM06} show that $Pm|prec,p_j=p|\Cmax$ is solvable in polynomial time when the precedence graph is of bounded height and the maximum degree is bounded. This result is in fact a special case of the one proposed in~\cite{DW84}.

\subsubsection*{Other criteria and/or machine environment}

The other well-studied criterion for parallel machine environment is the total completion time $\sum C_j$.
One of the reasons for that is that, for some problems, it is equivalent to solve the total completion time and the makespan 
since they admit an ideal schedule.
For example, ideal schedules exist when considering two machines, arbitrary precedence constraints
and equal processing times, hence problem $P2|prec,p_j=p|\sum C_j$ is polynomially solvable with CG-algorithm (see~\cite{CG72}). When adding release dates, \cite{BT01} show that $P2|prec,r_j,p_j=1|\sum C_j$ is solvable in polynomial time by reducing it to a shortest path problem.  

For an arbitrary number of machines, if the precedence graph forms an outtree and the processing times are UET, 
the same result holds and hence $P|outtree,p_j=p|\sum C_j$ is solvable in polynomial time.
Note that problem $P3|intree,p_j=p|\sum C_j$ is not ideal, see~\cite{HL06} for a counterexample.
Nevertheless, for any fixed number of machines, problem $Pm|intree,p_j=p|\sum C_j$ is solvable in polynomial time (see~\cite{BBKT04}).
Adding release dates maintains the same property: the algorithm proposed in~\cite{BHK02} solves simultaneously
problems $P|outtree,r_j,p_j=1|\Cmax$ and $P|outtree,r_j,p_j=1|\sum C_j$. An improvement of this algorithm has
been proposed in~\cite{HL05}. 

For interval order graphs, \cite{M89} notices that $P|io,p_j=p|\sum C_j$ is solvable in polynomial time since the proof of the algorithm for $P|io,p_j=p|\Cmax$ (in \cite{PY79}) only
uses swaps between tasks, and this property is verified by the makespan and the total completion time.
It has been noticed recently that the same result holds for overinterval orders (that properly contain interval orders),
the problem admits also an ideal solution, so $P|oio,p_j=p|\sum C_j$ is solvable in polynomial time (see \cite{W15}).

When considering uniform parallel machines, only few results are available;  
problem $Q2|chains,p_j=p|\Cmax$ is solvable in polynomial time (see~\cite{BHK99}). 
If one processor is going $b$ times faster than the other (with $b$ an integer), the problem
$Q2|tree,p_j=p|\Cmax$ is also polynomially solvable (see~\cite{K89}); the problem is also
ideal, and hence  $Q2|tree,p_j=p|\sum C_j$ is also solvable in polynomial time.

\subsection{Minimum {\cal NP}-hard cases}

The interesting results for this survey are the \NP-hardness of $P|prec,p_j=p|\Cmax$ and $P|prec,p_j=p|\sum C_j$ (see~\cite{LRK78}).
Note that the proof for the two problems also holds when the precedence graph is of bounded height, and that the problem $P|opp. forest,p_j=p|\Cmax$ is
also \NP-hard (see~\cite{GJTY83}).
When adding release dates, the corresponding problem is already \NP-hard for intrees, for both the makespan and the total completion time (see \cite{BGJ77}).
Finally, what may not be obvious is  that even if $P||C_{max}$ is strongly \NP-hard, 
the problem $P|w(G)\leq k|C_{max}$ is solvable in pseudo-polynomial time (\cite{MT99}). 
This results is due to the fact that the class of empty graphs is not included in bounded-width graphs class.

\subsection{Open problems}

For an arbitrary number of machines and the makespan criterion, the boundary between polynomially solvable and \NP-hard problems seems very sharp,
we believe that the efforts should not be focused on these problems. When the number of machines is fixed, this boundary is much larger.
Surprisingly, to the best of our knowledge, no other structures of precedence graphs than the ones introduced in previous section 
have been studied for problem $P_m|prec,p_j=p|\Cmax$. In our opinion, it could be a good opportunity 
to work on more general precedence graphs on this problem, to be able to arbitrate if $P_m|prec,p_j=p|\Cmax$ is solvable in polynomial time or \NP-hard. The most natural extension in our opinion
is to consider series-parallel graphs, since it is a generalization of opposing forests, level order graphs and DC-graphs, for which the problem is polynomially solvable.

For the total completion time criterion, the two most intriguing problems are
$P|intree,p_j=p|\sum C_j$ and $P|outtree, r_j,p_j=p|\sum C_j$.
For the former problem, the interest lies in the fact that there exists an ideal schedule for outtree precedences, but not for intree precedences. Nevertheless, we do
believe that this problem admits an optimal polynomial algorithm.
For the latter problem, an algorithm exists for $p_j=1$ (i.e., release dates are multiple of the processing time, see \cite{BHK02}), and hence the
gap to $p_j=p$ seems small.

For a fixed number of uniform parallel machines and the makespan criterion, the set of open problems is wide, since the only polynomial algorithm is for $Q2|chains,p_j=p|\Cmax$ (\cite{BHK99}), and the problem $Qm|prec,r_j,p_j=p|\Lmax$ is still open. 
The same behavior occurs for the total completion time: \cite{DLLV90}  proved that $Qm|r_j,p_j=p|\sum C_j$ is solvable in polynomial time, and problem $Qm|prec,r_j, p_j=p|\sum C_j$ is still open.
We hence believe that this set of problems deserves a deeper study. A first approach may consist in trying
to adapt the algorithms available for identical parallel machines.

\section{Parallel machines with preemption}\label{sec:paraYes}

Timkovsky shows very strong links between preemption and chains, including the fact that
a large set of scheduling problems with preemption can be reduced to problems without preemption,
with UET tasks, and where each job is replaced by a chain of jobs (see Theorem 3.5 in~\cite{T03}).
This interesting result can be applied in many cases, but does not hold for the total completion time criterion.
Moreover, according to the structure of the precedence graph, the resulting graph may not have the same structure.
For example, an intree where each job is replaced by a chain of jobs remains an intree, whereas it does not hold for
interval order graphs (two independent jobs will be replaced by two chains of parallel jobs, which is not an interval order graph).
That is why we will examine more precisely what happens in this section.

\subsection{Polynomial cases}

\subsubsection*{Makespan criterion}

Since $P|tree,p_j=p|\Cmax$ is polynomially solvable (see~\cite{H61}), by Timkovsky's result, so is $P|pmtn,tree|\Cmax$.
The first polynomial algorithm for this problem is proposed in~\cite{MC70} with a running time of $O(n^2)$.
An algorithm in $O(n\log m)$ was then proposed in~\cite{GJ80}. Note that the latter algorithm produces at most $O(n)$
preemptions whereas the former may obtain a schedule with $O(nm)$ preemptions.
\cite{L82} studies the case with release dates and outtree, and shows that it can be solved in $O(n^2)$
with a dynamic priority list algorithm (i.e., priorities may change according to what has already been scheduled).

Timkovsky's result can not be applied to precedence graphs such as interval order graphs.
Yet, it was proven that the problem is also solvable in polynomial time for this precedence structure;
first, \cite{SS89} show it for a fixed number of machines $Pm|pmtn,io|\Cmax$ by using a linear programming approach
based on the set of jobs scheduled at each instant. Later \cite{D99} proposes another linear program that
solves the problem for an arbitrary number of machines, i.e., $P|pmtn,io|\Cmax$.
For a fixed number $m$ of machines, \cite{MQ05} extend the result of \cite{SS89}, by proposing an linear programming
approach for ${\cal A}_m-$order graphs (which properly contain interval order graphs).

\subsubsection*{Other criteria and/or machine environment}

\cite{DLY91} explain that $P2|pmtn,chains|\sum C_j$ is strongly \NP-hard by showing that preemption is useless for this problem, that is why
we will only focus on the UET case for the total completion time criterion. \cite{CM05} prove that $P2|pmtn,prec, p_j=p|\sum C_j$ is solvable in polynomial time by adapting the Coffman-Graham algorithm. Their algorithm finds an ideal schedule. 
By proving that preemption is redundant,
\cite{BT01} prove that $P2|pmtn, outtree, r_j, p_j=1|\sum C_j$ is solvable in polynomial time.
When the precedence graph is an intree, \cite{CDK12} prove that the problem is not ideal, and
\cite{CCDK15} provide a deep analysis of the structure of preemption.
Using the same methodology than \cite{BT01}, \cite{BHK02} show that the problem is solvable in polynomial time with an outtree and an arbitrary number of machines.
Moreover, they provide a $O(n^2)$ algorithm, that admits a $O(n \log n)$ implementation according to~\cite{HL05}.
\cite{L06} slightly improves the result of \cite{BT01} and proposes an algorithm of complexity $O(n^2)$ for the problem
$P2|pmtn,outtree,r_j,p_j=p|\sum C_j$.

\subsection{Minimum {\cal NP}-hard cases}

For the makespan criterion, \cite{U76} shows that $P|pmtn,prec,p_j=p|\Cmax$ is \NP-hard.
For the total completion time criterion, if we carefully look at the \NP-hardness proof of 
$P|prec,p_j=p|\sum C_j$ in~\cite{LRK78},
we can see that preemption is useless for the instance constructed in the reduction  and hence
the preemptive version is still \NP-hard for precedence graphs of bounded height:

\begin{Theorem}
$P|pmtn,prec,p_j=p|\sum C_j$ is \NP-hard even if the precedence graph is of bounded height.
\end{Theorem}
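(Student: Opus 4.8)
The plan is not to build a new reduction but to recycle the one of~\cite{LRK78} that already proves $P|prec,p_j=p|\sum C_j$ to be \NP-hard, and which, as recalled just above, produces a precedence graph of bounded height. I would take that construction unchanged --- the same jobs, the same common processing time $p$, the same height-bounded precedence graph, the same number of machines, and the same target value $y$ for the decision version --- and reinterpret it as an instance of the preemptive problem $P|pmtn,prec,p_j=p|\sum C_j$. Since the precedence graph is literally the same, the bounded-height guarantee is inherited for free, and the whole argument reduces to showing that allowing preemption does not lower the optimal total completion time on these particular instances.

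One direction is trivial: every non-preemptive schedule is a feasible preemptive schedule, so the preemptive optimum is at most the non-preemptive one. The real content is the converse, namely that on the constructed instance no preemptive schedule can do strictly better than the best non-preemptive one; this is exactly the \emph{preemption is useless here} observation. I would prove it by a local exchange argument that rests on two features of the instance: all jobs have the same length $p$, and the height is bounded. Starting from any feasible preemptive schedule $S$, I would repair its preemptions one at a time, in non-decreasing order of completion time. Whenever a job is split into several fragments, the equal-length assumption lets me re-glue its fragments into a single $p$-long block and fill the freed machine-time with equally long fragments of jobs that are already available in that window; because each such swap exchanges two pieces of the same total length, it neither lengthens the window nor delays any completion time, so $\sum C_j$ does not increase. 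Iterating yields a non-preemptive schedule $S'$ with $\sum C_j(S') \le \sum C_j(S)$ meeting the same threshold, which closes the reduction.

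The step I expect to be the real obstacle --- and the one that genuinely needs the specific structure of the \cite{LRK78} gadget rather than a generic equal-length argument --- is checking that these re-gluing swaps never violate a precedence constraint. For general precedence graphs and at least three machines preemption can strictly help, so the argument cannot be purely generic: I would have to inspect the levels of the reduction's graph and verify that whenever a successor fragment is moved, all of its predecessors have already completed within the window being rearranged, so that the availability assumption used in the swap is legitimate. The bounded height is what makes this inspection finite and tractable, since it bounds the number of predecessor layers on which a moved fragment can depend; establishing this \emph{no harmful swap} invariant for every gadget of the reduction is where the bulk of the careful work lies, after which the equality of the two optima, and hence the \NP-hardness of $P|pmtn,prec,p_j=p|\sum C_j$ for bounded-height graphs, follows immediately.
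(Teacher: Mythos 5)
Your overall plan coincides with the paper's: reuse the {\sc Clique} reduction of \cite{LRK78} verbatim, inherit the bounded height for free, and argue that preemption cannot help on the constructed instance. The gap is that the decisive step --- the converse direction, i.e., that no \emph{preemptive} schedule meets the threshold unless $G$ has a clique of size $k$ --- is exactly where your argument stops. The ``re-glue the fragments and swap equal-length pieces'' procedure is never instantiated: you do not show that the freed time can always be filled by available fragments without creating idle time or delaying a completion, and you yourself concede that such exchange arguments fail in general for $\geq 3$ machines with precedence constraints (indeed, preemption can strictly reduce $\sum C_j$ even for intrees with UET jobs, so no generic de-preemption lemma exists). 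Deferring the verification to an unspecified ``inspection of the gadget'' leaves the theorem unproved; that inspection \emph{is} the theorem. Moreover, the feature you propose to lean on --- bounded height limiting the number of predecessor layers --- is not what makes the instance rigid, so even the intended inspection is aimed at the wrong structural property.

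What actually closes the argument, and what the paper does, is a counting argument on completion times that holds verbatim for preemptive schedules, with no de-preemption step at all. The instance has $n=3m$ unit jobs on $m$ machines and threshold $\sum C_j \leq 6m$; if any job finishes after time $3$ then $\sum C_j \geq 1\cdot m + 2\cdot m + 3\cdot(m-1) + C_{j^*} > 6m$, so the threshold forces $\Cmax \leq 3$, hence no idle time and exactly $m$ jobs receiving a full unit of processing in each of the intervals $[0,1]$, $[1,2]$, $[2,3]$. This pins the dummy chains into their slots and reduces the question to whether $k'+l$ non-dummy jobs can be eligible at time $1$, which happens iff a $k$-clique exists --- the original combinatorial core, untouched by preemption. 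If you want to salvage your ``preemption is useless here'' framing, this rigidity (every job completes at an integer time and occupies a full unit interval) is the statement you must prove, and it follows from the tightness of the bound $6m$, not from an exchange argument or from the bounded height.
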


\begin{proof}
We just need to modify slightly the proof in the reduction from {\sc Clique} of \cite{LRK78}.

{\sc Clique} : $G=(V,E)$ is an undirected graph and $k$ an integer. Does $G$ have a clique on $k$ vertices?

Let us recall this reduction; we denote by $v$ (resp. $e$) the number of vertices (resp. edges) of $G$.
We also define following parameters: $l=\frac{k(k-1)}{2}$, $k'=v-k$ and $l'=e-l$ (we use the notations of the original article).
We construct an instance of $P|pmtn,prec,p_j=p|\sum C_j$ with $m=\max \{k,l+k',l'\} +1$ machines and $n=3m$ jobs:
\begin{itemize}
\item for each vertex $i\in V$, there is a job $J_i$.
\item for each edge $[i,j]\in E$, there is a job $J_{[i,j]}$.
\item dummy jobs $J_{h,t}$ with $h\in D_t, t\in\{1,2,3\}, D_1=\{1,\dots,m-k\}, D_2=\{1,\dots,m-l-k'\}, D_3=\{1,\dots, m-l'\}$.
\end{itemize}
There are precedence constraints between any job $J_{g,t}$ and $J_{h,t+1}$, for any $g\in D_t, h\in D_{t+1}, t=1,2$.
Moreover, for any edge $[i,j]\in E$, there is a precedence between $J_i$ and $J_{[i,j]}$.
Finally, the question is whether there is a schedule such that $\sum C_j\leq 6m$.

Clearly, if {\sc Clique} has a solution, so is the scheduling problem: 
\begin{itemize}
\item dummy jobs $J_{h,t}$ are scheduled during the time-interval $[t-1,t]$,
\item the $k$ jobs corresponding to the vertices of the clique are scheduled during the first period,
\item the $l$ jobs corresponding to the edges of the clique, and the $k'$ jobs corresponding to the remaining vertices are scheduled at the second period,
\item all the remaining jobs are scheduled at the third period.
\end{itemize}
This solution has a total completion time of exactly $6m$. Note that it does not use preemption, and is such that $\Cmax=3$.

Conversly, let us prove that if there is a schedule such that  $\sum C_j\leq 6m$, then there is a clique of size $k$.
First, we can easily prove that if there is a schedule such that  $\sum C_j\leq 6m$, then there is a schedule such that $\Cmax \leq 3$. Indeed, if there exists a job $j*$ such that $C_{j*}>3$, then in the best case (i.e., if there is no precedence constraint), we have $\sum C_j \geq 1*m+2*m+3*(m-1)+C_{j*} > 6m$.
We hence know that there is no idle time and that the dummy jobs $J_{h,t}$ are scheduled during interval $[t-1,t]$.
To conclude, we just need to see that if there is no clique of size $k$ then, whatever the schedule on interval $[0,1]$ without idle time, 
the number of eligible jobs at time $1$ is strictly less than $k'+l$, which implies an idle time and hence no schedule such that $\Cmax \leq 3$.

\end{proof}

\subsection{Open problems}

Preemptive parallel machine scheduling problems did not receive as much attention as their non-preemptive counterpart,
hence the set of open problems is wider.

For the makespan objective and an arbitrary number of machines, when the precedence graph is an interval order graph, it is known to be solvable in polynomial time,
as in the UET non-preemptive case. Since for the UET non-preemptive problem, new classes strictly including interval order graphs (namely 
quasi-interval order graphs and over-interval order graphs) have led to polynomial algorithms, the same question arises for 
$P|pmtn,qio|\Cmax$ and $P|pmtn,oio|\Cmax$.

When the number of machines is fixed, there is a wide set of open problems, since $Qm|pmtn,prec,r_j|\Lmax$ is the maximal open problem.
A good challenge may be for example to try to fix the complexity of $Pm|pmtn,prec,p_j=p|\Cmax$, or at least to try to 
find new precedence graphs for which the problem is polynomial, by taking advantage of the fact that tasks are UET (even if it may
not always be helpful with preemption).


For the total completion time criterion, $P|pmtn,outtree,p_j=p|\sum C_j$ is maximal polynomially solvable, and 
we just show that $P|pmtn,prec,p_j=p|\sum C_j$ is \NP-hard even if the precedence graph is of bounded height.
It could be interesting to consider other structures of precedence graph to derive polynomial algorithms.
In a similar way, \cite{BBKT04} show that $P|pmtn, p_j=p|\sum T_j$ is solvable in polynomial time, and adding
precedence constraints makes the problem \NP-hard, but there is no other result available in the literature, it hence
would be interesting to search for new polynomial cases by testing different precedence graphs.

For the problem $P2|pmtn, prec,r_j,p_j=1|\sum C_j$, the gap is even wider: it is polynomial if the precedence graph is an outtree (see~\cite{BT01})
but all the other cases are open, from $P2|pmtn, intree, r_j, p_j = 1|\sum C_j$ to $P2|pmtn, prec, r_j, p_j = 1|\sum C_j$.

\section{Conclusion} \label{sec:ccl}

In this paper, we surveyed the complexity results for scheduling problems with precedence constraints,
and we have seen that single machine scheduling problems have been
much more studied than others. This looks quite normal since the single machine problem is the
scheduling problem that is the closest to order theory. Nevertheless, we show that
there still are a few open problems for the single machine case.
We believe that the most interesting problems for which the complexity is open lie
in the parallel machine case; more precisely, we do conjecture that $Pm|sp-graph,p_j=p|\Cmax$
is solvable in polynomial time; this result will be a large breakthrough since series-parallel graphs
are most of the time studied for single machine problems.

Another approach to understand the complexity of scheduling problems is to deal with
the parameterized complexity (see \cite{DF12}).
There are only very few results on parameterized complexity of scheduling problems. 
One can cite \cite{FM03} who show that, if the precedence graph is of bounded width $w$ (it is equal 
to the size of a maximum antichain), then problem $1|prec|\sum T_j \leq k$ is FPT when
parameterized by $(w,k)$.
The most recent result on the subject is that $P||\Cmax$ is FPT for parameter $p_{max}=\max p_j$ (see \cite{MW14}).
A good graph measure is a powerful tool for parameterized complexity.
For general (undirected) graphs, the creation of the treewidth (see \cite{RS86})
helped to discover many results in graph theory, including the Courcelle's theorem (\cite{C90}). 
For directed graphs, and more specifically DAGs, none of the existing measures (see \cite{GHKLOR14}) is satisfactory. 
In our opinion, a major breakthrough will be achieved when one will be able to find a good measure
on directed acyclic graphs.

\bibliographystyle{plainnat}
\bibliography{Biblio}

\newpage
\section*{Appendix A: List of results}

For an easier reading of all the complexity results that are reviewed in this survey, we proposed a synthesis in the following tables.
In each table, we write the polynomial cases, some open cases (the ones that seem the most promising in our opinion) and the \NP-hard problems.

\begin{center}
\begin{table}[h!]
\begin{center}
\begin{tabular}{|l|l|l|}
\hline
 Problem & Complexity & Reference \\
\hline
	$1|sp-graph|\sum w_j C_j $ & \P & \cite{L78} \\
	$1|dim\leq 2|\sum w_j C_j$ & \P & \cite{AM09}\\
	$1|sp-graph,p_j^{(t)}=p_j(1+at)|\sum w_jC_j$ & \P & \cite{WNC08} \\
	$1|sp-graph,p_j{(t)}=p_j+\alpha_jt|\Cmax$ & \P & \cite{WNC08}\\
	$1|sp-graph,p_j^{[r]}=a_j+b_j r|\Cmax$ & \P & \cite{WW13}\\
	$1|sp-graph,p_j^{[r]}=a_j-b_j r|\Cmax$ & \P & \cite{WW13}\\
	$1|sp-graph,p_j^{[r]}=p_jr|\Cmax$ & \P & \cite{GPSW08}\\
	$1|sp-graph,p_j^{[r]}=p_j\gamma^{r-1}|\sum C_j$ with $\gamma \geq 2$ or $0<\gamma<1$ & \P & \cite{GPSW08}\\
	$1|sp-graph,p_j^{(t)}=p_j(1-at)|\sum w_jC_j$ & \P & \cite{GPSW08} \\
	$1|sp-graph,p_j^{(t)}=p_j+at|\sum C_j$ & \P & \cite{GPSW08} \\
	$1|sp-graph,p_j^{(t)}=p_j-at|\sum C_j$ & \P & \cite{GPSW08} \\
	$1|pmtn, r_j, p_j=p|\sum T_j$ & \P & \cite{TNC06} \\	
	$1|pmtn, r_j, p_j=p|\sum w_j U_j$ & \P & \cite{B99} \\	
	&&\\
	
	$1|2 < fdim < 3|\sum w_jC_j$ & Open &\\
	$1|io|\sum C_j$ & Open & \\
	$1|pmtn, chains, r_j, p_j=p|\sum T_j$ & Open &  \\	
	$1|pmtn, prec, r_j, p_j=p|\sum w_j T_j$ & Open &  \\	
	$1|pmtn, chains, r_j, p_j=p|\sum U_j$ & Open &  \\	
	$1|pmtn, prec, r_j, p_j=p|\sum w_j U_j$ & Open &  \\	
	
	&&\\

	$1|prec,p_j=p|\sum w_j C_j $ & \NP-hard & \cite{L78} \\
	$1|prec|\sum C_j $ & \NP-hard & \cite{L78} \\
	$1|indegree\leq 2|\sum C_j$ & \NP-hard & \cite{LRK78}\\
	$1|h(G)\leq k|\sum w_jC_j$ & \NP-hard & \cite{L78}\\
	$1|io|\sum w_jC_j$ & \NP-hard & \cite{AMMS11}\\
	$1|fdim\geq 3|\sum w_jC_j$ & \NP-hard & \cite{AMMS11}\\
	$1|chains,p_j=1|\sum U_j$ & \NP-hard & \cite{LRK80}\\
	$1|chains,p_j=1|\sum T_j$ & \NP-hard & \cite{LY90}\\
	\hline
\end{tabular}
\caption{Complexity results for single machine problems}
\label{tab:single}
\end{center}
\end{table}
\end{center}

\begin{center}
\begin{table}[h!]
\begin{center}
\begin{tabular}{|l|l|l|}
\hline
 Problem & Complexity & Reference \\
\hline
	$P|tree,p_j=p|\Cmax$& \P & \cite{H61} \\
	$P|outtree,p_j=p|\sum C_j$& \P & \cite{H61} \\
	$Pm|opp. forest,p_j=p|\Cmax$& \P & \cite{GJTY83} \\
	$Pm|lo,p_j=p|\Cmax$& \P & \cite{DW85} \\
	$P|io,p_j=p|\Cmax$& \P & \cite{PY79} \\
	$P|io,p_j=p|\sum C_j$& \P & \cite{M89} \\
	$P|qio,p_j=p|\Cmax$& \P & \cite{M99} \\
	$P|oio,p_j=p|\Cmax$& \P & \cite{CM05} \\
	$P|oio,p_j=p|\sum C_j$& \P & \cite{W15} \\
	$P|outtree, r_j,p_j=p|\Cmax$& \P & \cite{BGJ77} \\
	$P|DC-graph,p_j=p|\Cmax$& \P & \cite{KRP09} \\
	$Pm|h(G)\leq k,p_j=p|\Cmax$& \P & \cite{DW84} \\
	$P|w(G) \leq k, r_j, p_j=1|f$ & \P & \cite{MT99}\\
	$P2|prec, p_j=p|\Cmax$ & \P & \cite{CG72}\\
	$P2|prec,p_j=p|\sum C_j$ & \P & \cite{CG72}\\
	$P2|prec,r_j,p_j=1|\sum C_j$ & \P & \cite{B04}\\
	$Q2|chains,p_j=p|\Cmax$ & \P & \cite{BHK99}\\
	$Pm|intree,p_j=p|\sum C_j$& \P & \cite{BBKT04} \\
	$P|outtree, r_j,p_j=1|\sum C_j$& \P & \cite{BHK02} \\
	$Qm|r_j, p_j=p|\sum C_j$ & \P & \cite{DLLV90}\\

	&&\\

	$Pm|sp-graph, p_j=p|\Cmax$ & Open & \\
	$Pm|prec, p_j=p|\Cmax$ & Open & \\
	$P|intree,p_j=p|\sum C_j$& Open &  \\
	$P|outtree, r_j,p_j=p|\sum C_j$& Open & \\
	$Qm|prec,r_j,p_j=p|\Lmax$ & Open &\\
	$Qm|prec,r_j,p_j=p|\sum C_j$ & Open & \\
	&&\\
	$P2||\Cmax$ & \NP-hard & \cite{LRKB77}\\
	$P2|chains|\sum C_j$ & \NP-hard & \cite{DLY91}\\
	$P|opp. forest,p_j=p|\Cmax$	 & \NP-hard & \cite{GJTY83} \\
	$P|h(G)\leq k,p_j=p|\Cmax$	 & \NP-hard & \cite{LRK78} \\
	$P|intree, r_j,p_j=p|\Cmax$& \NP-hard & \cite{BGJ77} \\
	$P|h(G)\leq k,p_j=p|\sum C_j$	 & \NP-hard & \cite{LRK78} \\
	$P|intree, r_j,p_j=1|\sum C_j$& \NP-hard &  \cite{L} \\

	\hline
\end{tabular}
\caption{Complexity results for parallel machine problems without preemption}
\label{tab:single}
\end{center}
\end{table}
\end{center}

\begin{center}
\begin{table}[h!]
\begin{center}
\begin{tabular}{|l|l|l|}
\hline
 Problem & Complexity & Reference \\
\hline
	$P|pmtn,tree|\Cmax$ & \P & \cite{MC70} \\
	$P|pmtn,outtree,r_j|\Cmax $ & \P & \cite{L82} \\
	$P|pmtn,io|\Cmax$ & \P & \cite{D99}\\
	$Pm|pmtn,{\cal A}_m|\Cmax$ & \P & \cite{MQ05}\\
	$P2|pmtn,outtree, r_j,p_j=1|\sum C_j$ & \P & \cite{BT01}\\
	$P2|pmtn,outtree, r_j,p_j=p|\sum C_j$ & \P & \cite{L06}\\
	$P|pmtn,outtree, r_j,p_j=1|\sum C_j$ & \P & \cite{BHK02}\\
	$P2|pmtn,prec,p_j=1|\sum C_j$ & \P & \cite{C03}\\
	&&\\
	
	$Pm|pmtn,qio, p_j=p|\Cmax$ & Open & \\
	$Pm|pmtn,oio, p_j=p|\Cmax$ & Open & \\
	$Pm|pmtn,lo, p_j=p|\Cmax$ & Open & \\
	$Pm|pmtn,sp-graph, p_j=p|\Cmax$ & Open & \\
	$Pm|pmtn,prec, p_j=p|\Cmax$ & Open & \\
	$Qm|pmtn,prec,r_j|\Lmax$ & Open & \\
	$P2|pmtn, intree, r_j, p_j = 1|\sum C_j$ & Open & \\
	$P2|pmtn, prec, r_j, p_j = 1|\sum C_j$ & Open & \\

	&&\\
	$P|pmtn,prec,p_j=p|\Cmax$ & \NP-hard & \cite{U76} \\
	$P2|pmtn,chains|\sum C_j$ & \NP-hard & \cite{DLY91}\\
	$P|pmtn,h(G)\leq k,p_j=p|\sum C_j$ & \NP-hard & [this paper]\\
	$P2|pmtn,chains,p_j=1|\sum w_j C_j$ & \NP-hard & \cite{DLY91}\\
	$P2|pmtn,chains,p_j=1|\sum U_j$ & \NP-hard & \cite{BBKT04}\\
	\hline
\end{tabular}
\caption{Complexity results for parallel machine problems with preemption}
\label{tab:single}
\end{center}
\end{table}
\end{center}

\end{document}